\newcommand{\suppress}[1]{}
\newtheorem{theorem}{Theorem}%[section]
\newtheorem{lemma}{Lemma}%[section]
\newtheorem{definition}{Definition}%[section]
\newtheorem{corollary}{Corollary}%[section]
\newtheorem{statement}{Statement}
\def\e{\varepsilon}
\def\cX{\mbox{$\cal{X}$}}
\def\cF{\mbox{$\cal{F}$}}
\def\cG{\mbox{$\cal{G}$}}
\def\cR{\mbox{$\cal{R}$}}
\def\e{\varepsilon}
\newcommand{\cI}{{{\cal I}}}
\newcommand{\bW}{{{\bf W}}}
\newcommand{\bR}{{{\bf R}}}
\def\01{\{0,1\}}
\newcommand{\remove}[1]{}
\newcommand{\Gl}{  {G^{\lambda,e}}}
\newcommand{\Il}{  {\mathcal{I}^{\lambda,e}}}
\newcommand{\fe}{  {f_{\tt \cI}(\lambda)}}
\begin{document}
\IEEEoverridecommandlockouts
\title{Edge removal in undirected networks}

\author{Michael Langberg\ \ \ \ \ \ \ \ \  \ \ \ \ \ \ \ \ \ 
%~\IEEEmembership{Senior Member,~IEEE},
Michelle Effros
%,~\IEEEmembership{Fellow,~IEEE} \\
%M. F. Wong \IEEEmembership{Student Member,~IEEE}
%\thanks{Manuscript received January xx, 2013.}
\thanks{M. Langberg is with the Department of Electrical Engineering at The State University of New York at Buffalo.  
Email : \texttt{mikel@buffalo.edu}}
\thanks{M. Effros is with the Department of Electrical Engineering at the California Institute of Technology.
Email : \texttt{effros@caltech.edu}}
%\thanks{M. F. Wong is with the Department of Electrical Engineering at the California Institute of Technology.
%Email : \texttt{***}}
\thanks{This work is supported in part by NSF grants CCF-1817241 and CCF-1909451.}
%\thanks{A preliminary version of this work was presented in \cite{LE11,LE:12}.}
}

\maketitle

\begin{abstract}
The edge-removal problem asks whether the removal of a $\lambda$-capacity edge from a given network can decrease the communication rate between source-terminal pairs  by more than $\lambda$. 
In this short manuscript, we prove that for undirected networks, removing a $\lambda$ capacity edge decreases the rate by $O(\lambda)$.
Through previously known reductive arguments, here newly applied to undirected networks, our result implies that the zero-error capacity region of an undirected network equals its vanishing-error capacity region.
Whether it is possible to prove similar results for directed networks remains an open question.
\end{abstract}

\section{Introduction} 
The {\em edge removal problem}, defined and studied in \cite{HEJ:10,JEH:11}, 
aims to quantify the loss in capacity 
that results from the removal of a single edge 
(i.e., a point-to-point channel) 
from a given network coding instance. 
For some network coding instances, it is known that the removal of an edge of capacity $\lambda$ can decrease the rate of communication for each source-receiver pair by at most $\lambda$  \cite{HEJ:10,JEH:11}. These instances include networks with collocated sources, networks in which we are restricted to perform linear encoding, networks in which the edges removed are connected to terminals with no out going edges, as well as other families of network coding instances. 
However, whether the removal of an edge of capacity $\lambda$ decreases the rate of communication for each source-receiver pair by at most $\lambda$ for {\em any} network coding instance remains an intriguing open problem connected to a spectrum of (at times seemingly unrelated) questions in the context of network communication (see, e.g., \cite{CG10,chan2014network, LE11, LE:12, WLE13, WLE:15,WLE:16,LE:19}). 

In this work we study the edge removal problem on {\em undirected} networks. In an undirected network, the information on any edge $e=(u,u')$, can travel from $u$ to $u'$ and/or from $u'$ to $u$, as long as the sum of the rates in both directions do not exceed the edge capacity. 
Undirected networks have seen several studies in the context of network coding (e.g., \cite{li2004networka,li2004networkb,harvey2004comparing,li2006achieving,jain2006capacity,chekuri2006average,harvey2006capacity,li2009constant,maheshwar2012bounding,langberg2009multiple,cai2015network, haeupler2019network, braverman2016network}).
To date, the arguably most well known open question regarding network coding in undirected networks concerns the maximal potential benefit in rate that one can obtain in multiple-unicast instances when comparing communication via network coding to communication without coding, i.e., the {\em coding advantage} in undirected multiple unicast networks. 
It is conjectured in \cite{li2004networka,li2004networkb,harvey2004comparing} that no such advantage exists. This conjecture has been confirmed on several special cases (e.g.,\cite{li2004networkb,jain2006capacity,harvey2006capacity}) but remains an open question in full generality.

Our work is structured as follows.
In Section~\ref{sec:model}, we present our model and define a number of statements regarding the edge-removal problem.
Our main results and analysis are given in Section~\ref{sec:main}.
In Theorem~\ref{the:main}, we show that for any undirected network coding instance $\cI$ there exists a constant $c$ such that the removal of an edge of capacity $\lambda$ from $\cI$ reduces the rate between source-terminal pairs by at most $c\lambda$.
We then derive two immediate corollaries to Theorem~\ref{the:main}. In Corollary~\ref{cor:main}, we prove the so-called asymptotic edge removal statement on undirected instances. This statement asserts that removing an edge of negligible capacity has a negligible effect on the rate between source-terminal pairs. Building on prior work \cite{LE11}, in Corollary~\ref{cor:zero} we prove that the zero-error capacity region and the vanishing-error capacity region of undirected network coding instances are equal.

\section{Model} 
\label{sec:model}

Throughout the paper, the size of a finite set $S$ is denoted by $|S|$. 
For any positive real $k$, $[k]$ denotes the set $\{1,...,\lfloor k \rfloor\}$.
We use bold letters to denote vectors; for example, $\bR=(R_1,...,R_{k})$ is a vector of dimension $k$ and $R_i$ is the $i^{th}$ element of vector $\bR$.
We define $\mathbf{R}-\gamma$ as $((R_1-\gamma)^+,...,(R_{k}-\gamma)^+)$ where
%We use $\oplus$ to denote binary addition.
$(R-\gamma)^+ = \max\{0,R-\gamma\}$.
For $\alpha>0$ and a set $\cR$ of real vectors, the set $\alpha \cR$ refers to the set obtained by multiplying each vector in $\cR$ by $\alpha$.

\subsection{Network Coding Instances and Network Codes}
An undirected instance ${\mathcal I}=(G,S,D,M)$ of the network coding problem includes an undirected network $G=(V,E)$, a 
vector of $k$ source nodes $S =(s_1,\dots, s_k) \in V^k$, a vector of terminal nodes $D = (d_1, \dots, d_r) \in  V^r$, and a binary requirement matrix $M=[m_{ij}]$ in which $m_{ij}=1$ if and only if the message of source $s_i$ is requested by terminal $d_j$. 
Source node $s_i \in S$ holds message random variable $W_i$ demanded by terminals $\{d_j \mid m_{ij}=1\}$. Each edge $e \in E$ has an associated capacity $\lambda_e$.

We here assume that communication occurs in $N$ {\em rounds}, and in each round, every edge $e \in E$ carries a message over an alphabet $\cX^n_e$ of size $\lfloor 2^{\lambda_e n}\rfloor$. 
We call $N$ the {\em outer blocklegth} and $n$ the {\em inner blocklegth}.
Namely, we think of communication over edge $e$ in terms of symbols over the alphabet $\cX^n_e$ corresponding to $n$ channel uses.
We thus use the term {\em time-step} to refer to each round of communication.

More formally, for an outer blocklength $N$, and an inner blocklegth $n$, network code 
$$({\mathcal F},\mathcal{G})=(\{\overrightarrow{f_{e,t}}\},\{\overleftarrow{f_{e,t}}\},\{g_j\})$$ 
is an assignment of encoding functions 
$\{\overrightarrow{f_{e,t}}\}$ and $\{\overleftarrow{f_{e,t}}\}$ for every time step $t \in [N]$ and each edge $e\in E$ and a decoding function $g_j$ to each terminal $d_j\in D$. 
At each time step $t$ and for each edge $e=(u,u')$ the alphabet $\cX^n_e$ is represented by two sets 
$$\overrightarrow{\cX^n_{e,t}} \ \mbox{and} \ \overleftarrow{\cX^n_{e,t}} \ \mbox{such that}\ |\overleftarrow{\cX^n_{e,t}}|\cdot |\overrightarrow{\cX^n_{e,t}}| \leq |\cX^n_e|.$$
At each time step $t$ and for each edge $e=(u,u')$
the edge message
$\overrightarrow{X^n_{e,t}} \in \overrightarrow{\cX^n_{e,t}}$ from $u$ to $u'$ and the message $\overleftarrow{X^n_{e,t}} \in \overleftarrow{\cX^n_{e,t}}$ from $u'$ to $u$ are equal to the evaluation of encoding functions $\{\overrightarrow{f_{e,t}}\}$ and $\{\overleftarrow{f_{e,t}}\}$ on inputs $X^n_{{\rm In}(u),[t-1]}$ and $X^n_{{\rm In}(u'),[t-1]}$, respectively. 
Here, for a generic node $u_0$, and time $t$, 
$$X^n_{{\rm In}(u_0),[t]}=(\overrightarrow{X^n_{e',t'}}:e' = (v,u_0) \in E, t' \leq t), (W_i: u_0=s_i)$$  
captures all information available to node $u$ at time $t$.
The evaluation of decoding function $g_{j}$ on the vector of random variables $X_{{\rm In}(d_j),[N]}$ equals the reproduction of message random variables $(W_i: m_{i,j}=1)$ requested at terminal node $d_j \in D$. 

Suppose that we are given rate vector $\bR=(R_1,\dots,R_k)$, constant $\e \in [0,1]$, and positive integers $n,N$. Instance $\cI$ of the network coding problem is said to be $(\e,\mathbf{R},n,N)$-feasible if for $W_i$ uniformly distributed over $[2^{R_iNn}]$ (for $i \in [k]$) there exists a network code $({\mathcal F,G})$ with inner-blocklength $n$ and outer-blocklength $N$ such that, 
with probability at least $1-\e$, for each $d_j \in D$ the output of decoding function $g_j$ equals $(W_i: m_{i,j}=1)$. 
%The probability is taken  with respect to the uniform distribution on the message random variables $\bW=(W_1,\dots,W_k)$.

\begin{definition}
[Capacity region]
\label{def:cap}
The capacity region of $\cI$, denoted by $\mathcal{R}(\cI)$, is the set of all rate vectors $\mathbf{R}$ such that for all $\e>0$ and all ${\Delta}>0$ there exist infinitely many blocklengths $n$ and infinitely many blocklengths $N$ such that $\cI$ is $(\e, \mathbf{R}-{\Delta},n,N)$-feasible. 
%$\mathcal{R}^0_0(\cI)$ is the zero-error capacity region for independent sources.
\end{definition}

\begin{definition}
[Zero-error capacity region]
\label{def:zcap}
The zero-error capacity region of $\cI$, denoted by $\mathcal{R}_0(\cI)$, is the set of all rate vectors $\mathbf{R}$ such that for all ${\Delta}>0$ there exist infinitely many blocklengths $n$ and infinitely many blocklengths $N$ such that $\cI$ is $(0, \mathbf{R}-{\Delta},n,N)$-feasible. 
%$\mathcal{R}^0_0(\cI)$ is the zero-error capacity region for independent sources.
\end{definition}
\vspace{2mm}

\noindent
{\bf Some remarks are in place.} 
For directed acyclic networks, our Definitions~\ref{def:cap} and \ref{def:zcap}, which use both inner and outer blocklengths, are equivalent to the standard definitions of capacity, e.g., \cite{ACLY00}, in which for a single blocklength parameter $\tilde{n}$, each edge $e$ of capacity $\lambda_e$ can communicate a message in $\lfloor 2^{\lambda_e \tilde{n}} \rfloor$. In this equivalence, the blocklegth $\tilde{n}$ equals the product $Nn$.

Our notion of inner and outer blocklengths stems from two aspects of cyclic networks. Primarily, given the cyclic dependence of information flowing through the network, communication is often defined in rounds, in which each round of communication depends on the information obtained through previous rounds. Hence we employ the outer blocklegth $N$.
Secondly, to accommodate networks with edge capacities $\lambda_e$ for which $\lfloor 2^{\lambda_e} \rfloor=0$ (e.g., the bounding model for a binary symmetric channel from \cite{koetter2011theory}), we consider communication over {\em sub-rounds}  in which outgoing edge messages are aggregated over an inner blocklegth of size $n$. The rate $R$ is normalized by the product $Nn$.

Operationally speaking, our notion of inner and outer blocklengths governs the cyclic dependence of coding operations over time, where for the inner-blocklength $n$ the cyclic dependence is temporarily broken. Therefore, our definitions imply  tradeoffs between the outer-blocklegth $N$ and inner-blocklegth $n$. For example, if all edge capacities are integral, then any network code that is $(\e, \mathbf{R}-{\Delta},n,N)$-feasible is also $(\e, \mathbf{R}-{\Delta},1,nN)$-feasible, but the other direction does not necessarily hold. 

We now address two lemmas that are useful in our analysis.
%Lemma~\ref{lemma:eps} below appears in a similar form in \cite{LE11} and is presented here for completeness.
%\ml{[The remaining part of this subsection has been changed significantly]}

\begin{lemma}[\cite{LE11}]
\label{lemma:eps}
Let $\cI=(G,S,D,M)$. Let $\bR \in \cR(\cI)$. Then for any $\Delta>0$ there exist infinitely many blocklengths $n$ and infinitely many blocklengths $N$ such that $\cI$ is  $(\e, \bR-\Delta,n,N)$-feasible with $\e \leq 1/\max^2(n,N)$.
\end{lemma}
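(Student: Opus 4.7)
The plan is a standard blocklength-amplification argument via code concatenation: boost a base code of rate $\bR - \Delta/2$ and constant error $\e_0$ by running it many times in sequence and protecting the resulting ``super-alphabet'' with an outer Reed--Solomon code, driving the overall error down exponentially in the amplification parameter while paying only a small fractional rate loss.

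Concretely, pick a small constant $\beta$ with $(1-\beta)(\bR - \Delta/2) \geq \bR - \Delta$ (entrywise) and a small constant $\e_0 \ll \beta$. Since $\bR \in \cR(\cI)$, Definition~\ref{def:cap} supplies infinitely many pairs $(n_0, N_0)$ for which $\cI$ is $(\e_0, \bR - \Delta/2, n_0, N_0)$-feasible; fix any such pair. For an integer parameter $T$, each source partitions $W_i$ into $(1-\beta)T$ chunks of the base-code's per-source message size, applies a length-$T$, dimension-$(1-\beta)T$ Reed--Solomon (MDS) code over a sufficiently large alphabet to obtain $T$ symbols, and transmits the $t$-th symbol via the $t$-th of $T$ sequential invocations of the base code. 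The composite code has outer blocklength $N = T N_0$, inner blocklength $n = n_0$, and rate at least $\bR - \Delta$.

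For the error analysis, observe that each run $t$ of the base code fails iff its input lies in a fixed ``bad'' subset of density $\e_0$ in the base-code message space. The MDS property guarantees that any $(1-\beta)T$ of the $T$ run-inputs are jointly uniform on the message space, so the failure indicators $X_1,\ldots,X_T$ are $(1-\beta)T$-wise independent with mean $\e_0$. A standard $k$-wise-independent Chernoff-type tail bound then yields $\Pr[\sum_t X_t \geq \beta T/2] \leq 2^{-\Omega(T)}$, and when this event does not occur the outer Reed--Solomon decoder recovers the source messages exactly. Choosing $T = \Theta(\log \max(n_0, N_0))$ with a sufficiently large hidden constant makes $2^{-\Omega(T)} \leq 1/\max^2(n_0, T N_0)$. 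Since $(n_0, N_0)$ ranges over an infinite family, the inner blocklengths $n = n_0$ and outer blocklengths $N = T N_0$ also range over infinitely many values, giving the claim.

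The main obstacle is obtaining tail concentration for $\sum_t X_t$ despite the base code being deterministic: naive repetition does not help, since any input landing in the base code's error set lands there in every repetition as well. The essential device is the MDS property of the outer code, which converts the $T$ base-code inputs into a $(1-\beta)T$-wise-independent family of uniform random variables, enough to support a Chernoff-type deviation bound and thus the exponential decay in $T$.
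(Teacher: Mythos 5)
Your proposal is correct and accomplishes essentially the same amplification goal as the paper's proof, but it takes a genuinely different route on the key technical point. The paper uses a Gilbert--Varshamov outer code together with random permutations $\sigma_{ij}$ applied per-symbol, per-source; the purpose of the permutations is to make the per-run failure events truly independent for a \emph{fixed} message, after which a Chernoff bound plus an averaging/de-randomization step yields a fixed set of permutations that works for a uniform message. You instead take the outer code to be MDS (Reed--Solomon), and exploit the MDS structural property directly: for a uniform source message, any $(1-\beta)T$ coordinates of the codeword are jointly uniform, which immediately makes the per-run failure indicators $(1-\beta)T$-wise independent with the correct marginal $\e_0$. This replaces the permute-then-average device by a single structural observation, at the cost of requiring the outer code to be MDS (a real restriction, whereas the GV bound works over any alphabet without a field structure). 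Your tail bound is also fine: the crude $\Pr[\sum_t X_t \geq \beta T/2] \le \binom{T}{\beta T/2}\e_0^{\beta T/2} \le (2e\e_0/\beta)^{\beta T/2}$ already gives $2^{-\Omega(T)}$ once $\e_0 < \beta/(4e)$, so you do not even need a sophisticated limited-independence Chernoff statement. One structural difference worth noting: you amplify the \emph{outer} blocklength ($N = TN_0$, $n = n_0$), whereas the paper first amplifies the inner blocklength ($n\to nm$) and then separately converts to a large outer blocklength via alphabet re-bracketing; both routes yield infinitely many valid $(n,N)$ pairs as required. Two small technical points you should smooth out if writing this in full: (i) Reed--Solomon requires a field, so you should either round the base message alphabet to a nearby prime power (absorbing a further $o(1)$ rate loss) or re-pack bits into a fixed field $\F_{2^s}$ with $s \ge \log T$, either of which is routine; (ii) you should make sure the correction radius is stated as $\lfloor \beta T/2 \rfloor$ rather than $\beta T/2$, though this does not affect the exponent. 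Neither point is a gap in the argument.
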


\begin{proof}
Our proof, presented here for completeness, follows the line of proof given in Claim 2.1 of \cite{LE11}.
Let $\bR \in \cR(\cI)$. Let $\e>0$ and $\Delta>0$. Consider an $(\e,\bR-\Delta,n,N)$-feasible code $(\cF,\cG)$ for $\cI$ with $n$ and $N$ sufficiently large. We first show below, using $m$ parallel executions of $(\cF,\cG)$ with a carefully chosen {\em outer code}, for any $\Delta'>0$ and any sufficiently large $m$, that $\cI$ is  $(\e',\bR-\Delta',nm,N)$-feasible with $\e' \leq 1/(nm)^2$. 
%Noting that any $(\e',\bR-\Delta,nm,N)$-feasible code for $\cI$, is also a $(\e',\bR-\Delta,n,Nm)$-feasible code for $\cI$This suffices to prove our assertion.

We start by setting some notation.
Let $\tilde{W}=(\tilde{W}_1,\dots,\tilde{W}_k)$ be the messages corresponding to code $(\cF,\cG)$ with  $\tilde{W}_{i} \in [2^{(R_i-\Delta)n}]$ for $i=1,\dots,k$.
Let $A$ be the subset of source messages $\tilde{W}$ for which code $(\cF,\cG)$ results in a decoding error. 
Let $(\cF_m,\cG_m)$ be the code obtained by executing $(\cF,\cG)$ in parallel $m$ times (with independent source information).
Namely, $(\cF_m,\cG_m)$ executes $m$ independent sessions of the original $(\e,\bR-\Delta,n,N)$ feasible-code $(\cF,\cG)$ on $m$ independent sub-messages.
For $i=1,\dots,k$, let $\tilde{W}^m_i=\tilde{W}_{i1},\dots,\tilde{W}_{im}$ be the messages corresponding to source $s_i$ in $(\cF_m,\cG_m)$ with $\tilde{W}_{ij} \in [2^{(R_i-\Delta)n}]$ for $j=1,\dots, m$.
For $i=1,\dots,k$, let $W^m_i \in [2^{(R_i-\Delta)n}]^{m(1-\delta)}$ for a parameter $\delta>0$ to be specified later.
Here, $W^m_i$ represents the message corresponding to source $i$ in an $(\e',\bR-\Delta',nm,N)$-feasible code $(\cF_\sigma,\cG_\sigma)$ that we will construct shortly.
For $i=1,\dots,k$, let $E_i: [2^{(R_i-\Delta)n}]^{(1-\delta)m} \rightarrow [2^{(R_i-\Delta)n}]^{m}$ be the encoder of an error correcting code over alphabet $[2^{(R_i-\Delta)n}]$ of rate $(1-\delta)$ and relative distance $H^{-1}(\delta)$. Here $H$ is the binary entropy function, and the existence of such a code follows from the Gilbert-Varshamov bound \cite{Gil52,Var57}.
We use code $E_i$ to map message $W_i^m \in [2^{(R_i-\Delta)n}]^{(1-\delta)m}$ to message $\tilde{W}_i^m \in [2^{(R_i-\Delta)n}]^{m}$. 
Denote by $E_{ij}$ the restriction of $E_i$ to the $j$'th entry of $E_i$ (over the alphabet $[2^{(R_i-\Delta)n}]$).
We use code $E_{ij}$ to map message $W_i^m$ to message $\tilde{W}_{ij}$. 
For $i=1,\dots,k$ and $j=1,\dots,m$, consider permutations $\sigma_{ij} : [2^{(R_i-\Delta)n}] \rightarrow [2^{(R_i-\Delta)n}]$ chosen uniformly and independently at random. 
After the error correcting code, we apply $\sigma_{ij}$ to message $\tilde{W}_{ij}$, permuting the symbol before transmission.

By the definitions above, it holds for any $j=1,\dots,m$ and any message $(W^m_1,\dots, W^m_k)$ that 
$$
\Pr_{(\sigma_{1j},\dots,\sigma_{kj})}[(\sigma_{1j}(E_{1j}(W^m_1)), \dots, \sigma_{kj}(E_{kj}(W^m_k))) \in A] \leq \e
$$
As the permutations for different $j=1,\dots,m$ are chosen independently, we can apply the Chernoff bound to conclude that for any fixed vector of messages $(W^m_1,\dots, W^m_k)$ and uniform and independent $\{\sigma_{ij}\}$, the probability that there are more than $2\e m$ values of $j$ between $1$ and $m$ for which 
$$
(\sigma_{1j}(E_{1j}(W^m_1)), \dots, \sigma_{kj}(E_{kj}(W^m_k))) \in A
$$
is at most $2^{-\e m/2}$.
This now implies the existence of permutations $\{\sigma_{ij}\}$ for which the probability over uniform and independent messages $(W^m_1,\dots, W^m_k)$ that there are more than $2\e m$ values of $j$ between $1$ and $m$ for which 
$$
(\sigma_{1j}(E_{1j}(W^m_1)), \dots, \sigma_{kj}(E_{kj}(W^m_k))) \in A
$$
is at most $2^{-\e m/2}$.

Let $\Delta'>0$.
Using the discussion above, we now describe an $(\e',\bR-\Delta',nm,N)$-feasible coding scheme $(\cF_\sigma,\cG_\sigma)$ for $\cI$ with $\e' = 2^{-\e m/2} \leq 1/(nm)^2$ for sufficiently large $m$ as follows.  
Roughly speaking, for encoding, $(\cF_\sigma,\cG_\sigma)$ consists of a pre-communication processing phase done at each source after which code $(\cF_m,\cG_m)$ is executed. 
Similarly, for decoding, terminals in $(\cF_\sigma,\cG_\sigma)$ first decode using code $(\cF_m,\cG_m)$ and then apply a post-communication processing phase.
Let $(W^m_1,\dots, W^m_k)$ be $k$ source messages for $(\cF_\sigma,\cG_\sigma)$. 
For each $j=1,\dots, m$, every source $i$ computes $\sigma_{ij}(E_{ij}(W^m_i))$, which is the input to be transmitted during the $j$'th session of $(\cF,\cG)$ in code $(\cF_m,\cG_m)$. 
For decoding, each terminal first applies the decoding of code $(\cF,\cG)$ to each of the $m$ sessions of $(\cF,\cG)$ in $(\cF_m,\cG_m)$. 
Since the probability that  there are more than $2\e m$ values of $j$ between $1$ and $m$ for which $(\sigma_{1j}(E_{1j}(W^m_1)), \dots, \sigma_{kj}(E_{kj}(W^m_k))) \in A$ is 
at most $\e'$; it holds that with probability at least $1-\e'$ over $(W^m_1,\dots, W^m_k)$ each terminal will correctly decode all but $2\e m$ of $(\sigma_{i1}(E_{i1}(W^m_i)), \dots, \sigma_{im}(E_{im}(W^m_i)))$ for each source $i$ it required.
Reversing the permutations $\sigma_{i1},\dots,\sigma_{im}$ the terminal can recover a reconstruction of $E_i(W^m_i)=(E_{i1}(W^m_i), \dots, E_{im}(W^m_i))$ in which at most $2\e m$ entries $E_{ij}(W^m_i)$ are corrupted. Applying a nearest codeword decoding recovers $W^m_i$ as long as the minimum distance of the code $E_i$ is at least $4\e m +1$. Thus setting $\delta$ to satisfy $H^{-1}(\delta) = 4\e + \frac{1}{m}$ implies correct decoding.
The resulting code $(\cF_\sigma,\cG_\sigma)$ is $(\e',(\bR-\Delta)(1-\delta),nm,N)$-feasible for $\cI$. 
Finally, starting our analysis with sufficiently small $\e>0$ and $\Delta>0$, we conclude that the $(\e',(\bR-\Delta)(1-\delta),nm,N)$-feasible code $(\cF_\sigma,\cG_\sigma)$ for $\cI$ is also $(\e',\bR-\Delta',nm,N)$-feasible.
This shows that we can use a code with inner-blocklength $n$ and error probability $\e$ to build a code with inner-blocklength $nm$ and error probability $\e'$.

We next seek to present an $(\e',\bR-2\Delta',n+1,Nm)$-feasible code for $\cI$ with $\e'  \leq 1/(Nm)^2$ for sufficiently large $m$.  We start with the $(\e',\bR-\Delta',nm,N)$-feasible code $(\cF_\sigma,\cG_\sigma)$ for $\cI$ discussed above. Observing that the alphabet $[2^{\lambda nm}]$ is included in the alphabet $[2^{\lambda (n+1)}]^m$ for sufficiently large $n$, we conclude that $(\cF_\sigma,\cG_\sigma)$ is also $(\e',\frac{n}{n+1}(\bR-\Delta'),n+1,Nm)$-feasible, which in turn, for sufficiently large $n$ is $(\e',\bR-2\Delta',n+1,Nm)$-feasible as required.
\end{proof}

\begin{lemma}
\label{lem:alpha}
Let $\cI=(G,S,D,M)$. Let $\alpha>0$, and define $\alpha\cI=(\alpha G,S,D,M)$ to be the instance obtained by multiplying each edge capacity in $G$ by $\alpha$ (to obtain a graph here described as $\alpha G$). Then $\alpha\cR(\cI)=\cR(\alpha\cI)$ and $\alpha\cR_0(\cI)=\cR_0(\alpha\cI)$. 
\end{lemma}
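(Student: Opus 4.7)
The plan is to prove the two set equalities by explicit simulation of codes, leveraging the fact that scaling each edge capacity by $\alpha$ scales the per-symbol information that edges can transmit by $\alpha$, so that rate vectors transform by the same factor once the inner blocklength is rescaled appropriately.

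First I would prove the inclusion $\alpha\cR(\cI) \subseteq \cR(\alpha\cI)$ by showing how any feasible code for $\cI$ translates to a feasible code for $\alpha\cI$. Fix $\bR \in \cR(\cI)$, $\e > 0$, and $\Delta' > 0$. Choose a small $\Delta > 0$ (to be calibrated in terms of $\Delta'$ and $\alpha$) and pick $(n,N)$ in the infinite family promised by Definition~\ref{def:cap} giving an $(\e,\bR-\Delta,n,N)$-feasible code $(\cF,\cG)$ for $\cI$. Set $n' = \lceil n/\alpha \rceil$. Then for every edge $e$ we have $\alpha \lambda_e n' \geq \lambda_e n$, so the alphabet $\lfloor 2^{\alpha\lambda_e n'}\rfloor$ available to edge $e$ in $\alpha\cI$ at inner blocklength $n'$ contains (as a subset) the alphabet $\lfloor 2^{\lambda_e n}\rfloor$ used by $(\cF,\cG)$. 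Using this embedding edge-by-edge, $(\cF,\cG)$ is directly realized as a code on $\alpha\cI$ with inner blocklength $n'$ and outer blocklength $N$, transmitting messages of size $2^{(R_i-\Delta)Nn}$ and therefore achieving the rate vector $\tfrac{n}{n'}(\bR-\Delta)$. Since $n/n' \to \alpha$ as $n\to\infty$, by choosing $\Delta$ sufficiently small and $n$ sufficiently large along the infinite family, one can guarantee $\tfrac{n}{n'}(\bR-\Delta) \geq \alpha\bR - \Delta'$ componentwise; the resulting family of $(n',N)$ pairs is infinite since $\lceil\cdot/\alpha\rceil$ has finite fibers. This shows $\alpha\bR \in \cR(\alpha\cI)$.

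Next I would establish the reverse inclusion $\cR(\alpha\cI) \subseteq \alpha\cR(\cI)$. The cleanest route is to observe that the construction in the previous paragraph is symmetric in its assumptions: applying the same argument with $\alpha\cI$ in place of $\cI$ and $1/\alpha$ in place of $\alpha$, one obtains $(1/\alpha)\cR(\alpha\cI) \subseteq \cR((1/\alpha)(\alpha\cI)) = \cR(\cI)$, i.e., $\cR(\alpha\cI) \subseteq \alpha\cR(\cI)$. Together with the previous paragraph this yields $\alpha\cR(\cI) = \cR(\alpha\cI)$. The same simulation argument goes through verbatim for the zero-error case since the code translation does not introduce any additional error events (error probability is preserved exactly), giving $\alpha\cR_0(\cI) = \cR_0(\alpha\cI)$.

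The main obstacle is bookkeeping around the floor and ceiling operations: one must verify that $n' = \lceil n/\alpha \rceil$ really does yield alphabets at least as large as those in the original code for every edge, and that the rate loss from the discrepancy between $n/n'$ and $\alpha$ can be absorbed into $\Delta'$ in the limit. Both are mild and handled by taking $n$ large along the infinite family guaranteed by Definition~\ref{def:cap}; no nontrivial coding ideas are needed beyond this embedding.
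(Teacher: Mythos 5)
Your proposal is correct and takes essentially the same approach as the paper: simulate a code by embedding edge alphabets at a rescaled inner blocklength $\lceil \cdot \rceil$ of the scaled value, and obtain the reverse inclusion by applying the same argument to $\alpha\cI$ with scaling factor $1/\alpha$. The only cosmetic difference is which inclusion you prove directly --- you go from $\cI$ to $\alpha\cI$ with $n' = \lceil n/\alpha\rceil$, while the paper goes from $\alpha\cI$ to $\cI$ with inner blocklength $\lceil \alpha n\rceil$ --- and these are the same argument under the symmetry you already invoke.
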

\begin{proof}
We first note that we only need to prove, e.g., that $\cR(\alpha\cI) \subseteq \alpha\cR(\cI)$ as the other direction
$\cR(\alpha\cI) \supseteq \alpha\cR(\cI)$ then follows from taking $\cI'=\alpha\cI$ and $\alpha'=\frac{1}{\alpha}$ to obtain $\frac{1}{\alpha}\cR(\alpha\cI)=\alpha'\cR(\cI') \supseteq \cR(\alpha'\cI')=\cR(\cI)$.

For the direction $\cR(\alpha\cI) \subseteq \alpha\cR(\cI)$, consider $\bR \in \cR(\alpha\cI)$. Let $\e>0$ and $\Delta>0$. Then, $\bR \in \cR(\alpha\cI)$ implies the existence of an $(\e,\bR-\Delta,n,N)$ feasible code $({\mathcal F},\mathcal{G})$ for $\alpha \cI$ for infinitely many values of $n$ and $N$. We now argue that any code of outer-blocklength $N$ and inner-blocklength $n$ over $\alpha \cI$ can be executed on $\cI$ by a code of outer-blocklength $N$ and inner-blocklength $\lceil \alpha n \rceil$.
Specifically, each time step in $\alpha \cI$ over an inner-blocklength $n$ is executed by a single time step in $\cI$ over an inner-blocklength of $\lceil \alpha n \rceil$. 
%The first time step of $({\mathcal F}_n,\mathcal{G}_n)$ is executed in time steps $1$ through $\alpha$ in $\cI$ (each message sent on an edge of $\alpha \cI$ is broken down to $\alpha$ messages send on the same edge in $\cI$). It follows that the nodes of $\cI$ after the $\alpha$'th time step have the exact same information as the nodes in $\alpha \cI$ after the first time step. Similarly, the $i$'th time step of $({\mathcal F}_n,\mathcal{G}_n)$ is executed in time steps $(i-1)\alpha+1$ through $i\alpha$ in $\cI$, such that, as before, the nodes of $\cI$ after the $i\alpha$'th time step have the exact same information as the nodes in $\alpha \cI$ after the $i$'th time step. 
We conclude an $(\e,\frac{n}{\lceil \alpha n \rceil}(\bR-\Delta),\lceil \alpha n \rceil,N)$ code for $\cI$ for infinitely many values of $n$ and $N$. 
As $\e>0$ and $\Delta>0$ are arbitrary, and $n$ can be taken to be arbitrarily large, we have that $\frac{\bR}{\alpha} \in \cR(\cI)$, or equivalently, $\bR \in \alpha\cR(\cI)$.
\end{proof}

\begin{figure*}[t!]
\centering
\includegraphics[scale=0.6]{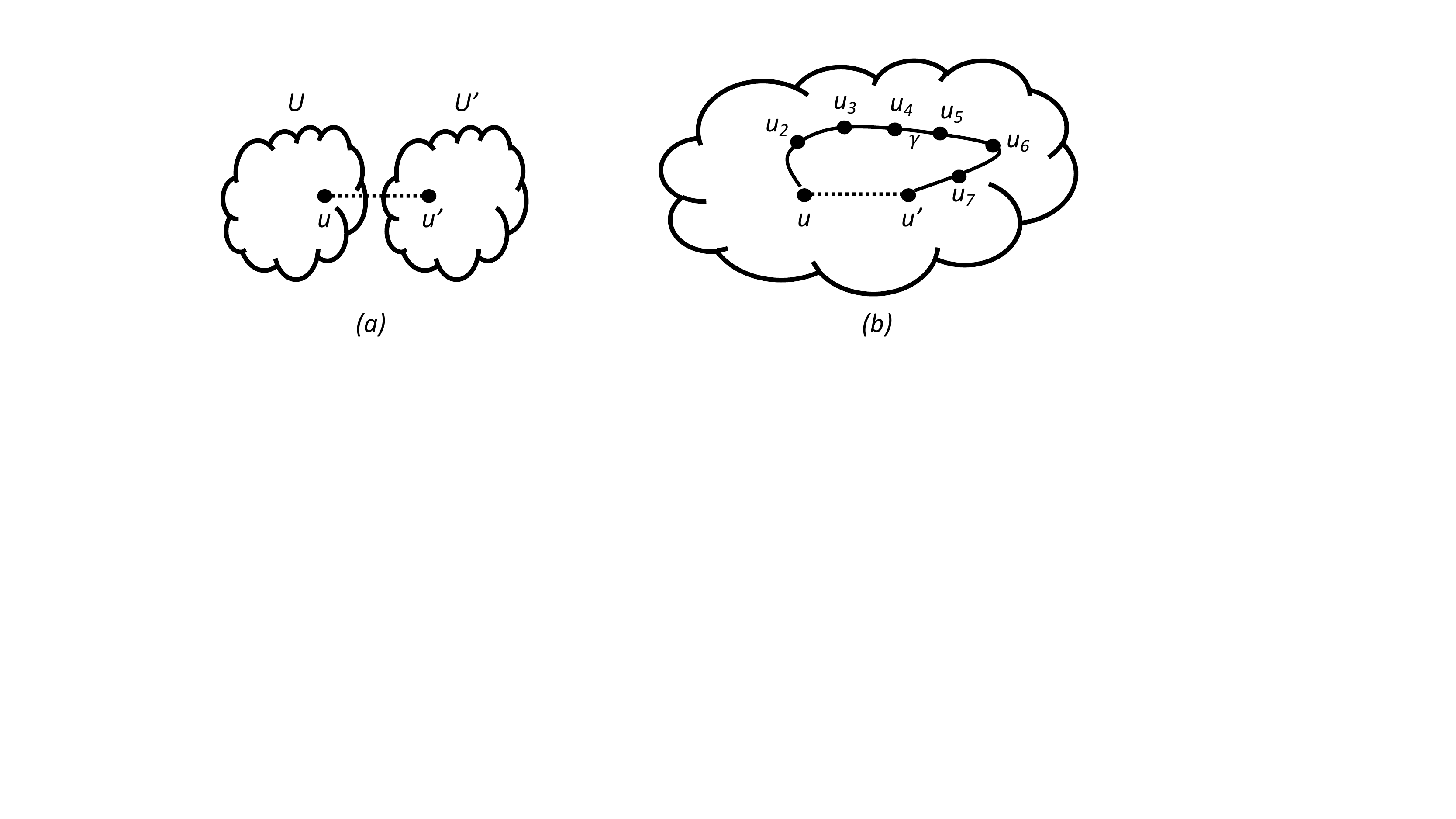}
\vspace{-70mm}
\caption{A schematic description of the items in Theorem~\ref{the:main}. The edge $e=(u,u')$ is marked as a dotted line. The first case (a) in which the graph $G$ of instance $\cI$ is disconnected and adding the edge $e=(u,u')$ connects between two components $U$ and $U'$ of $G$. The second case (b) in which there exists in $G$ a path connecting $u$ and $u'$. Here, the path is $u,u_2,u_3,\dots,u_{7},u'$ and $\gamma$ is the value of the minimum capacity edge $(u_4,u_5)$ in the path.}
\label{fig:cases}
\end{figure*}

%We now consider $\cI=(G,S,D,M)$ with rational edge capacities. Let $\alpha = \frac{m}{M}$ for integers $m, M \geq 1$. Let $n_0$ be any integer for which the edge capacities in $n_0G$ are integral. By Definition~\ref{def:cap2}, $\cR(\cI)=\frac{1}{n_0}\cR(n_0\cI)$, and by the analysis above, the value $\frac{1}{n_0}\cR(n_0\cI)$ is identical for any $n_0$ for which the edge capacities in $n_0G$ are integral. In addition, observing that $\frac{mn_0}{\alpha}$ is integral and that $\frac{mn_0}{\alpha}(\alpha G)$ also has integral edge capacities, it follows from Definition~\ref{def:cap2} that $\cR(\alpha\cI)=\frac{\alpha}{mn_0}\cR\left(\frac{mn_0}{\alpha}\cdot\alpha\cI\right) = \frac{\alpha}{mn_0}\cR(mn_0\cI)$. We thus conclude that, to show that $\alpha\cR(\cI)=\cR(\alpha\cI)$, it suffices to show that $\cR(m \cdot n_0\cI)=m\cR(n_0\cI)$. As $m$ is integral and $n_0\cI$ is an instance with integer edge capacities, the latter follows from the analysis given in the beginning of this proof (replacing $m$ with $\alpha$ and $\cI$ with $n_0 \cI$).
%
%A similar proof holds for zero-error capacity $\cR_0$. 
%\end{proof}

\subsection{Edge Removal}
Throughout the discussions in this work, we use the term ``edge-removal statement,'' often shortened to ``edge removal,'' to refer to the mathematical statement defined here.
{Let $\mathcal{I}=(G,S,D,M)$.}
Let $\Gl$ be the graph obtained by adding an (undirected) edge $e$ of capacity $\lambda >0$ to $G$.\footnote{{Instead of starting with a network and then {\em removing} an edge as in~\cite{HEJ:10,JEH:11}, 
it is more convenient for our presentation to start with a network and then {\em add} an edge as in \cite{LE:19}.}}
Let $\Il=(\Gl,S,D,M)$ describe the resulting network instance. 
%{In what follows, we consider instance $\cI$ and edges $e$ such that $\Gl$ is acyclic.}
The edge-removal statements given below compare the rate vectors  achievable over $\cI$ and $\Il$.
We use notation stemming from \cite{LE:19} to define the following variants of edge removal.
% (that satisfy $\lambda<\min_i{R_i}/2$ for a given rate vector $\bR$).
%It is not known for which functions $f$ the following statement is true.

\begin{statement}[The edge-removal statements on instance $\cI$]
\label{state:edge:f}
$\bullet\ $The {\em edge-removal statement} holds with function $\fe$ on instance $\cI$ and edge capacity $\lambda$ if for any edge $e \in V \times V$ 
$$ \bR \in \cR(\Il) \Longrightarrow \bR-\fe \in \cR(\cI).$$
$\bullet\ $The {\em zero-error edge-removal statement}  holds with function $\fe$ on instance $\cI$ and edge capacity $\lambda$ if for any edge $e \in V \times V$ 
$$ \bR \in \cR_0(\Il) \Longrightarrow \bR-\fe \in \cR_0(\cI).$$
$\bullet\ $The {\em vanishing-edge-removal statement} holds on instance $\cI$ if for any edge $e \in V \times V$ 
$$\cR(\cI)=\lim_{\lambda \rightarrow 0} \cR(\Il). $$
$\bullet\ $The {\em zero-error vanishing-edge-removal statement} holds on instance $\cI$ if for any edge $e \in V \times V$ 
$$\cR_0(\cI)=\lim_{\lambda \rightarrow 0} \cR_0(\Il). $$
\end{statement}

\section{Main results}
\label{sec:main}

We now present the main results of this work. 

\begin{theorem}
\label{the:main}
Let $\cI=(G,S,D,M)$ be an undirected network instance. Let $\lambda >0$. The edge-removal statement and the  zero-error edge-removal statement hold with function $\fe = c{\lambda}$ for some constant $c$ that depends only on the edge capacities in $\cI$.
\end{theorem}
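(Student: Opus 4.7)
The plan is to split into cases based on whether the endpoints $u,u'$ of the added edge $e$ lie in the same connected component of $G$, as depicted in Fig.~\ref{fig:cases}. In each case I start from a code for $\Il$ that is $(\e,\bR-\Delta,n_0,N_0)$-feasible, with $\e,\Delta,n_0,N_0$ chosen as permitted by Lemma~\ref{lemma:eps}, and construct a code for $\cI$ that is $(\e',\bR-\Delta-c\lambda,n,N)$-feasible for a constant $c$ depending only on the edge capacities of $\cI$.

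For case (a), let $U,U'$ denote the two components of $G$ that $e$ joins. For a source $s_i$ whose demands span both components, the cut-set bound on $e$ forces $R_i\le\lambda$ in $\Il$, whereas $R_i=0$ is the only feasible rate in $\cI$, so $(R_i-\lambda)^+=0$ is trivially in $\cR(\cI)$. For a source $s_i$ whose demands lie entirely within one component (say $U$), I construct a code for $\cI|_U$ in which $u$ locally simulates the operations of all nodes in $U'$ under a fixed message assignment $w_{U'}^*$ for the $U'$-sources; averaging the $\Il$-code's error over $W_{U'}$ uniform yields some $w_{U'}^*$ for which the conditional error is at most $\e$. A symmetric construction handles sources in $U'$, and since $\cI$ is the disjoint union of $\cI|_U$ and $\cI|_{U'}$, the two constructions compose into a single code. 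Thus $c_a=1$ works for case (a).

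For case (b), let $P=u_0=u,u_1,\dots,u_L=u'$ be a path from $u$ to $u'$ in $G$ and let $\gamma>0$ be its minimum edge capacity. The plan is to simulate the $\Il$-code on $\cI$ by pipelining $e$-traffic through $P$: each edge of $P$ is given a small increase in per-round capacity (via an increase in inner blocklength) beyond what is required to carry its regular $\Il$-content, and this extra capacity is used to relay $e$-content between $u$ and $u'$ along $P$. By splitting each $\Il$-round of $e$-traffic into many small chunks and pipelining them in parallel over many sub-rounds per $\Il$-round, the per-sub-round pipelining overhead on each edge of $P$ can be made negligible relative to the regular traffic. In the large-blocklength limit this drives the rate loss to at most $R_i\lambda/\gamma$ per source, and bounding $R_i$ by the sum $\Sigma$ of the edge capacities of $\cI$ gives $c_b=\Sigma/\gamma$, a constant depending only on $\cI$'s edge capacities. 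The zero-error statement follows the same arguments: the case (b) simulation is deterministic, while in case (a), zero error in $\Il$ gives zero conditional error for every $w_{U'}$, so any fixed $w_{U'}^*$ works. Taking $c=\max(c_a,c_b)$ completes both parts of the theorem.

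The main technical obstacle is the pipelining in case (b): a naive simulation of the bidirectional edge $e$ through a length-$L$ path either slows the $\Il$-code by a factor of $L$ (yielding a $1/L$ rate loss, independent of $\lambda$) or concentrates the overhead on a single sub-round (yielding a rate loss $\approx L\lambda/\gamma$). The resolution is to split $e$-traffic into many small chunks pipelined in parallel across many sub-rounds per $\Il$-round; amortizing the overhead drives the per-sub-round cost on each $P$-edge to the small fraction $\lambda/\gamma$ of its regular traffic, giving the desired $O(\lambda)$ rate loss with a constant depending only on $\cI$.
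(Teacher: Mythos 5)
Your overall structure --- case split on connectivity, cut-set bound plus simulation for the disconnected case, re-routing plus scaling for the path case --- matches the paper, and your rate-loss bookkeeping in case (b) (scale by roughly $\gamma/(\gamma+\lambda)$, bound the additive loss by $\lambda$ times the ratio of total to minimum capacity) is essentially the paper's. The gap is in how you propose to make the path re-routing causally consistent. Chunking does not overcome the delay obstacle you name: in the $\Il$-code the time-$(t+1)$ encoders at $u'$ take the \emph{entire} time-$t$ transmission from $u$ over $e$ as input, and routing that transmission through $\ell-1$ path edges makes its last bit arrive $\ell-1$ sub-rounds late; splitting it into chunks, pipelined or in parallel, does not change when the last chunk arrives, so $u'$ still cannot evaluate its time-$(t+1)$ encoder on schedule. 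Nor can you fix this with a static per-node delay offset: in an undirected (bidirectional, possibly cyclic) network the offset propagates from $u'$ to its neighbors and back to $u'$ with a strictly larger required offset, and there is no fixed point. This is exactly what produces the two failure modes you enumerate, and chunking alone does not escape them.

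The paper escapes them (Lemma~\ref{lem:path}) by a different device: it first \emph{interleaves $N$ independent sessions} of the code on independent messages, producing a code $(\tilde{\cF},\tilde{\cG})$ in which time step $i$ of session $j$ runs at global time $(i-1)N+j$. The crucial structural consequence is that a message received by any node during sub-block $i$ is consumed as encoder input only in sub-block $i+1$; that built-in one-sub-block slack is what absorbs the $\ell$-step path delay. Each sub-block is then stretched from $N$ to $N+\ell$ steps, the pipeline tail fits in the extra $\ell$ steps, and the multiplicative overhead $N/(N+\ell)\to 1$ as $N\to\infty$ for fixed $\ell$. Interleaving independent sessions is not the same as chunking one round's traffic --- the sessions are mutually causally independent while your chunks are not --- and this is the missing idea. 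Once you replace your chunking step with this interleaving (equivalently, invoke Lemma~\ref{lem:path} together with the scaling Lemma~\ref{lem:alpha}), your case (b) closes and yields the claimed constant.
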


The proof of Theorem~\ref{the:main} applies the follow lemma.

\begin{lemma}
\label{lem:path}
Let $\cI=(G,S,D,M)$. Let $e=(u,u')$ be an edge in $G$ of capacity $\lambda$. Let $\cI_{path}=(G_{path},S,T,M)$ be the instance obtained by modifying graph $G$ to yield a new graph $G_{path}$ as follows.
Starting with $\cI$, we wish to remove the edge $e=(u,u')$ and replace it with a path $u=u_1,u_2,u_3, \dots, u_{\ell-1},u_\ell=u'$ of length $\ell-1$ and capacity $\lambda$, where $u_i$ for $i = 2,3,\dots,{\ell-1}$ are nodes in $G$. When compared to $G$, the capacity of path edges $(u_i,u_{i+1})$ is increased in $G_{path}$ by $\lambda$. Then $\cR(\cI) \subseteq \cR(\cI_{path})$ and $\cR_0(\cI) \subseteq \cR_0(\cI_{path})$.
\end{lemma}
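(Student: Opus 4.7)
The plan is to establish $\cR(\cI)\subseteq\cR(\cI_{path})$ by simulating any code for $\cI$ on $\cI_{path}$, using the extra $\lambda$ capacity on each path edge as a bidirectional pipeline that replaces the deleted edge $e=(u,u')$. The main technical obstacle is that a single-message pipeline across $u_1,\dots,u_\ell$ incurs $\ell-2$ rounds of latency, so a naive one-copy simulation would cost a factor $\ell-1$ in rate. I will avoid this loss by running $\ell-1$ independent copies of the original code in parallel, staggered by one simulation round apiece, so that at every simulation round all $\ell-1$ stages of the path pipeline are simultaneously occupied by distinct copies.

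Concretely, fix $\bR\in\cR(\cI)$, $\e,\Delta>0$, and an $(\e,\bR-\Delta,n,N)$-feasible code $(\cF,\cG)$ for $\cI$ with $N$ large. I will build a code on $\cI_{path}$ of inner blocklength $n$ and outer blocklength $N':=(N+1)(\ell-1)$ by scheduling copy $k\in\{1,\dots,\ell-1\}$ to execute its $t$-th original round at simulation round $s_k(t):=k+(t-1)(\ell-1)$. At simulation round $s$, let $k^*\in\{1,\dots,\ell-1\}$ be the unique index with $s\equiv k^*\pmod{\ell-1}$: copy $k^*$ is \emph{active}, i.e.\ its nodes perform the full round-$t$ encoding of $(\cF,\cG)$ on every non-path edge at its original $G$-capacity, and its $u_1,u_\ell$ initiate the new edge-$e$ messages $\overrightarrow{X_{e,t}^{n,(k^*)}}$ and $\overleftarrow{X_{e,t}^{n,(k^*)}}$ on the extra $\lambda$ capacity of $(u_1,u_2)$ and $(u_{\ell-1},u_\ell)$, respectively. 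The remaining $\ell-2$ copies simultaneously occupy the distinct pipeline stages $j\in\{2,\dots,\ell-1\}$: the stage-$j$ copy uses its $u_j$ and $u_{j+1}$ to forward (in the forward and backward directions respectively) the edge-$e$ traffic it received on the neighboring path edge in round $s-1$, drawing only on the extra $\lambda$ of $(u_j,u_{j+1})$.

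Feasibility reduces to a per-edge check: on any path edge $(u_j,u_{j+1})$ the active copy consumes at most its original $G$-capacity $\lambda_{(u_j,u_{j+1})}$ (possibly zero if the edge was absent from $G$) for normal encoding, while the stage-$j$ copy consumes the extra $\lambda$ for forward-plus-backward pipelining whose combined rate is inherited from the original constraint $\overrightarrow{\lambda_e}+\overleftarrow{\lambda_e}\le\lambda$, keeping the total inside the $G_{path}$-capacity $\lambda_{(u_j,u_{j+1})}+\lambda$; non-path edges are used only by the active copy. Causality is immediate: copy $k$'s $u_\ell$ receives $\overrightarrow{X_{e,t}^{n,(k)}}$ by simulation round $s_k(t)+\ell-2=s_k(t+1)-1$, exactly when it is needed to encode copy $k$'s round $t+1$ (and symmetrically at $u_1$), so each of the $\ell-1$ copies runs an exact replica of $(\cF,\cG)$. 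The union bound gives simulated error at most $(\ell-1)\e$, and the effective rate is $(\ell-1)R_iN/N'=R_iN/(N+1)\to R_i$ as $N\to\infty$; taking $\e=\e'/(\ell-1)$ and $N$ sufficiently large yields an $(\e',\bR-\Delta',n,N')$-feasible code on $\cI_{path}$ for arbitrary $\e',\Delta'>0$, proving $\bR\in\cR(\cI_{path})$. The zero-error statement $\cR_0(\cI)\subseteq\cR_0(\cI_{path})$ follows identically since the construction is a deterministic re-scheduling of a zero-error code. The main bookkeeping hurdle I expect is verifying that the staggered offsets place the $\ell-1$ copies in $\ell-1$ distinct pipeline stages at every simulation round---which holds because $s-1,s-2,\dots,s-(\ell-1)$ form $\ell-1$ consecutive integers and hence cover every residue modulo $\ell-1$---and cleanly absorbing the $O(\ell)$ boundary rounds where the pipeline is filling or draining into the $(N+1)(\ell-1)$ outer blocklength.
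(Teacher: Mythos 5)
Your pipelining idea is in the same spirit as the paper's proof---both re-route the removed edge's traffic along the path and hide the latency by running several copies of the original code in parallel---but your scheduling contains a gap in the per-edge capacity check.

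In your staggered scheme, consider a fixed simulation round $s$ and a fixed path edge $(u_j,u_{j+1})$. The forward traffic there comes from the copy $k_1$ that initiated $j-1$ rounds earlier (its forward head has advanced to edge $j$), while the backward traffic there comes from the copy $k_2$ that initiated $\ell-1-j$ rounds earlier (its backward head, launched from $u_\ell$, has advanced to edge $j$). These are different copies except at the midpoint, so they are executing different original rounds $t_1\ne t_2$ of $(\cF,\cG)$. Your per-edge check invokes ``the original constraint $\overrightarrow{\lambda_e}+\overleftarrow{\lambda_e}\le\lambda$,'' but the model only guarantees $|\overrightarrow{\cX^n_{e,t}}|\cdot|\overleftarrow{\cX^n_{e,t}}|\le|\cX^n_e|$ for each $t$ \emph{separately}; it does not give $|\overrightarrow{\cX^n_{e,t_1}}|\cdot|\overleftarrow{\cX^n_{e,t_2}}|\le|\cX^n_e|$ for $t_1\ne t_2$. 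Concretely, if $(\cF,\cG)$ sends nearly $\lambda$ forward at round $t_1$ and nearly $\lambda$ backward at round $t_2$, your pipelined edge must carry close to $2\lambda$ of extra traffic, exceeding the budget. (Relatedly, your statement that ``the stage-$j$ copy uses its $u_j$ and $u_{j+1}$ to forward \ldots in the forward and backward directions respectively'' is internally inconsistent: a single copy's forward and backward heads sit on mirror-image edges $j$ and $\ell-j$, never the same edge except when $2j=\ell$.)

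The paper avoids exactly this by interleaving $N$ sessions (not $\ell-1$) and, crucially, \emph{synchronizing} them so that within each sub-block of $N$ consecutive rounds of $(\tilde\cF,\tilde\cG)$, every session is executing the \emph{same} original round $i$ of $(\cF,\cG)$. Then $\overrightarrow{\cX^n_{e,t}}$ and $\overleftarrow{\cX^n_{e,t}}$ are constant across the sub-block, so even though the forward and backward pipeline traffic on a path edge originates from different $(\tilde\cF,\tilde\cG)$ rounds, both use the round-$i$ alphabets and the product-of-alphabet-sizes constraint is inherited. That synchronization is the ingredient your construction is missing, and without it the per-edge capacity bound does not go through; your latency, rate, and union-bound bookkeeping are otherwise fine.
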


We start by proving Theorem~\ref{the:main} using Lemma~\ref{lem:path} above. We then prove Lemma~\ref{lem:path}. 

\begin{proof}
Let $\cI = (G,S,D,M)$. For any nodes $u$ and $u'$ in $V$, let $e=(u,u')$ be an edge of capacity $\lambda$ to be added to $G$. 
Let $\Il = (G^{\lambda,e},S,D,M)$.
Let $\bR^{\lambda,e}=(R_1^{\lambda,e},\dots,R_k^{\lambda,e}) \in \cR(\cI^{\lambda,e})$.
We show that $\bR^{\lambda,e}-c\lambda \in \cR(\cI)$ for some constant $c$ that depends only on the edge capacities in $G$.
We consider two cases depicted in Figure~\ref{fig:cases}. 

In the first case, we assume that the graph $G$ of instance $\cI$ is disconnected and that adding the edge $e=(u,u')$ connects two unconnected components of $G$. Let $U$ and $U'$ be a partition of the vertex set $V$ of $G$ such that $u \in U$, $u' \in U'$, and subsets $U$ and $U'$ are disconnected in $G$. In this case, the added edge $e$ acts as a bridge of capacity $\lambda$ between (perhaps subsets of) $U$ and $U'$.

We first consider sources $s_i \in U$ for which there exist  terminals $d_j$ in $U'$ such that $m_{ij}=1$. As $e$ is a bridge between $U$ and $U'$, then using the cut set bound (e.g., Corollary 25 of \cite{harvey2006capacity}) it follows that $R_i^{\lambda,e} \leq \lambda$. The same holds if $s_i \in U'$ and $d_j \in U$. 

We now consider all sources $s_i \in U$ such that all $d_j$ for which $m_{ij}=1$ satisfy $d_j \in U$. Denote this set of sources by $S_U$. Let $D_U = D \cap U$ be the set of terminals in $U$. Let $G_U$ be the subgraph of $G$ induced by the vertices in $U$. Finally, let $M_U$ be the minor of $M$ induced on columns and rows of $M$ corresponding to $U$. Consider the instance $\cI_U=(G_U,S_U,D_U,M_U)$. We now claim that $\bR_U=(R_i^{\lambda,e}:i \in S_U) \in \cR(\cI_U)$. 
We prove this claim using the following averaging argument. Let $({\mathcal F},\mathcal{G})$ be an 
$(\e,\bR^{\lambda,e}-\Delta,n,N)$-feasible network code for $\Il$. 
By an averaging argument on the source messages $\bW=(W_1,\dots,W_k)$, there exist fixed values $(w_i:i \not \in S_U)$ for $(W_i:i \not \in S_U)$ for which the probability of successful communication using $({\mathcal F},\mathcal{G})$ conditioned on  $(W_i=w_i:i \not \in S_U)$ does not exceed $\e$. Moreover, under the condition $(W_i=w_i:i \not \in S_U)$ the network code  $({\mathcal F},\mathcal{G})$ can be {\em simulated} on $\cI_U$. 
That is, there exists a network code $({\mathcal F}_U,\mathcal{G}_U)$ for $\cI_U$ in which, for any time step $t \in [N]$ and for any edge $e' \in G_U$, the values transmitted over $e'$ in $\Il$ using $({\mathcal F},\mathcal{G})$ are also transmitted over edge $e'$ in $\cI_U$ using $({\mathcal F}_U,\mathcal{G}_U)$. This follows from the fact that node $u$ can simulate all incoming information from node $u'$ in $({\mathcal F},\mathcal{G})$ given the knowledge that $(W_i:i \not \in S_U)=(w_i:i \not \in S_U)$. Thus, $({\mathcal F}_U,\mathcal{G}_U)$ is an $(\e,\bR_U-\Delta,n,N)$-feasible network code for $\cI_U$. As the argument applies for any $\e>0$ and $\Delta>0$, this implies that $\bR_U \in \cR(\cI_U)$.
Similarly, one can define $\cI_{U'}$ and show that $\bR_{U'}=(R_i^{\lambda,e}:i \in S_{U'}) \in \cR(\cI_{U'})$. Thus, we conclude that the rate vector $\bR = (R_1,\dots,R_k)$ for which $R_i = R_i^{\lambda,e}$ if $i \in S_U \cup S_{U'}$ and $R_i = 0$ otherwise satisfies $\bR \in \cR(\cI)$. This follows by running codes over $\cI_U$ and $\cI_{U'}$ in parallel. As, for each $i=1,\dots,k$, the above analysis implies that $R_i \geq R_i^{\lambda,e} - \lambda$ we conclude the assertion of the theorem for the case under study with $\fe = \lambda$.

In the second case, we assume that there exists in $G$ a path connecting $u$ and $u'$. Let $u=u_1,u_2,u_3,u_{\ell-1},u'=u_{\ell}$ be one such path and let $\gamma$ be the capacity of the minimum capacity edge in the path. Let $\delta > 0$ satisfy $\lambda = \delta\gamma$. Consider the graph $G_{path}$ obtained from $G^{\lambda,e}$ by removing the edge $e=(u,u')$ and increasing the capacity of all edges in the path $u_1,u_2,u_3,u_{\ell-1},u_{\ell}$ by $\lambda$. Let $\cI_{path}$ be the instance $(G_{path},S,D,M)$. 
By Lemma~\ref{lem:path}, if $\bR^{\lambda,e}=(R_1^{\lambda,e},\dots,R_k^{\lambda,e}) \in \cR(\cI^{\lambda,e})$ then $\bR^{\lambda,e} \in \cR(\cI_{path})$. 
Let $\alpha=\frac{\gamma}{\gamma+\lambda} = \frac{1}{1+\delta}$. Consider the instance $\alpha \cI_{path} = (\alpha G_{path},S,D,M)$. 
By Lemma~\ref{lem:alpha}, $\bR^{\lambda,e}\in \cR(\cI^{\lambda,e})$  implies $\alpha\bR^{\lambda,e} \in \cR(\alpha\cI_{path})$. 
Notice that the capacity of every edge in $\alpha G_{path}$ is at most the capacity of the corresponding edge in $G$. This is clearly true for edges in $G_{path}$ that are not on the path $u_1,u_2,u_3,u_{\ell-1},u_{\ell}$, and holds for path-edge $(u_i,u_{i+1})$ of capacity $\gamma' \geq \gamma$ as $\alpha(\gamma'+\lambda) \leq \gamma'$. Thus, $\alpha\bR^{\lambda,e} \in \cR(\alpha\cI_{path}) \subseteq \cR(\cI)$. 

Notice that 
$\bR^{\lambda,e} - \fe \leq \alpha\bR^{\lambda,e}$, and thus $\bR^{\lambda,e} - \fe \in \cR(\cI)$ for 
$$\fe = \frac{\delta}{1+\delta}\max_i{R_i^{\lambda,e}} \leq \delta(W(\cI)+\lambda) \leq 2\delta W(\cI)$$
where $W(\cI)$ is the sum of all edge capacities in $G$ and $(W(\cI)+\lambda) \leq 2 W(\cI)$ is the sum of all edge capacities in $G^{\lambda,e}$. 
Here, we assume without loss of generality that $\lambda \leq W(\cI)$ (as otherwise for all $i$, $R^{\lambda,e} \leq 2\lambda$, a setting in which the proof of the theorem is immediate). 
Thus, $\fe \leq 2\delta W(\cI) = \frac{2W(\cI)}{\gamma}\lambda$. Let $w(\cI)$ be the minimum edge capacity over all edges in $G$, then 
$\fe \leq \frac{2W(\cI)}{w(\cI)}\lambda = c\lambda$ for a constant $c=\frac{2W(\cI)}{w(\cI)}$ that only depends on the capacities of edges in $\cI$. 

An identical proof holds for the zero-error case.
\end{proof}

We now prove Lemma~\ref{lem:path}.

\noindent
\begin{proof}
%{\em Proof sketch:}
The proof follows the line of proof given in \cite{effros2012dependence,koetter2011theory}, in which it is shown that adding constant delays (independent of the blocklegth) in network communication has no impact on capacity.  Throughout, to simplify our presentation, we  consider the instance $\cI^*_{path}=(G^*_{path},S,D,M)$ in which we define $G^*_{path}$ (similar to $G_{path}$) by starting with $G$, removing the edge $e$, and replacing it with a path of length $\ell-1$ and capacity $\lambda$ consisting of nodes $u=u_1,u_2,u_3, \dots, u_{\ell-1},u_\ell=u'$. However, in $G^*_{path}$ the nodes $u_i$ for $i = 2,3,\dots,{\ell-1}$ are new nodes that do not originally appear in $G$. 
As any network code, for sources $S$, terminals $D$, and demands $M$, on $\cI^*_{path}$ can be implemented on $\cI_{path}$, it holds that $\cR(\cI^*_{path}) \subseteq \cR(\cI_{path})$. To conclude our proof, we show that $\cR(\cI) \subseteq \cR(\cI^*_{path})$.
%Towards this end, we also define the network $\cI_{delay}=(G,S,D,M)$ in which the information traversing edge $e$ in $G$ takes $\ell$ time steps (i.e., edge $e$ incurs a delay of $\ell$). It is shown in \cite{effros2012dependence}:Theorem 2 that $\cR(\cI_{delay})=\cR(\cI)$.

The proof that $\cR(\cI) \subseteq \cR(\cI^*_{path})$ proceeds as follows. 
For any $\bR \in \cR(\cI)$ and $\Delta>0$, by Lemma~\ref{lemma:eps}, consider an $(\e,\bR-\Delta,n,N)$ feasible code $(\cF,\cG)$ on $\cI$ with $n$ and $N$ sufficiently large and with $\e \leq 1/N^2$. 
%These latter conditions can be guaranteed by standard techniques in code concatenation, e.g., see proof of Claim 2.1 in \cite{langberg2011network}. 
{\em Interleaving} $N$ such codes on $N$ independent sub-messages from each source in $S$, as in \cite{effros2012dependence}, we obtain a new code $(\tilde{\cF},\tilde{\cG})$ that is $(N\e,\bR-\Delta,n,N^2)$ feasible as follows. 
The new code executes $N$ independent sessions of the original $(\e,\bR-\Delta,n,N)$ feasible-code $(\cF,\cG)$ on $N$ independent sub-messages.
The sessions operate in a time-interleaved manner. 
In time steps $t=1$ through $t=N$ of $(\tilde{\cF},\tilde{\cG})$, the first time step of all independent sessions of $(\cF,\cG)$ is executed. (Time step $1$ of independent session $j$ of $(\cF,\cG)$ operates in time step $j$ of $(\tilde{\cF},\tilde{\cG})$.) 
In general, in time steps $t=(i-1)N+1$ through $t=iN$ of $(\tilde{\cF},\tilde{\cG})$, the $i$'th time step of each independent session of $(\cF,\cG)$ is executed. 
(Time step $i$ of independent session $j$ of $(\cF,\cG)$ operates in time step $(i-1)N+j$ of $(\tilde{\cF},\tilde{\cG})$.)
After $N^2$ time steps, the $N$ independent sessions of $(\cF,\cG)$ are completed, implying an $(N\e,\bR-\Delta,n,N^2)$ feasible code $(\tilde{\cF},\tilde{\cG})$ for $\cI$. 
Here, we bound the error by a union bound over the individual independent sessions of $(\cF,\cG)$.
The given $(N\e,\bR-\Delta,n,N^2)$ feasible code $(\tilde{\cF},\tilde{\cG})$ for $\cI$ satisfies 
$\overleftarrow{\cX^n_{e,t}} = \overleftarrow{\cX^n_{e,t'}}$ for any $t=(i-1)N+j$ and $t'=(i-1)N+j'$ with $j,j'\in[N]$.
That is, for any time steps $t$ and $t'$ of $(\tilde{\cF},\tilde{\cG})$ in the same {\em sub-block} of length $N$, we have $\overleftarrow{\cX^n_{e,t}} = \overleftarrow{\cX^n_{e,t'}}$, as in that sub-block we are executing $N$ independent session of the same time step $i$ in the original code $(\cF,\cG)$.
Similarly, $\overrightarrow{\cX^n_{e,t}}=\overrightarrow{\cX^n_{e,t'}}$ for all $t,t' \in [(i-1)N+1, iN]$. 
Here and in what follows, we refer to time steps $t=(i-1)N+j$ for $j \in [N]$ in $(\tilde{\cF},\tilde{\cG})$ as the $i$'th sub-block of time steps of code $(\tilde{\cF},\tilde{\cG})$.
This will be useful when we consider $\cI^*_{path}$.

\begin{figure*}[t!]
\centering
\includegraphics[scale=0.46]{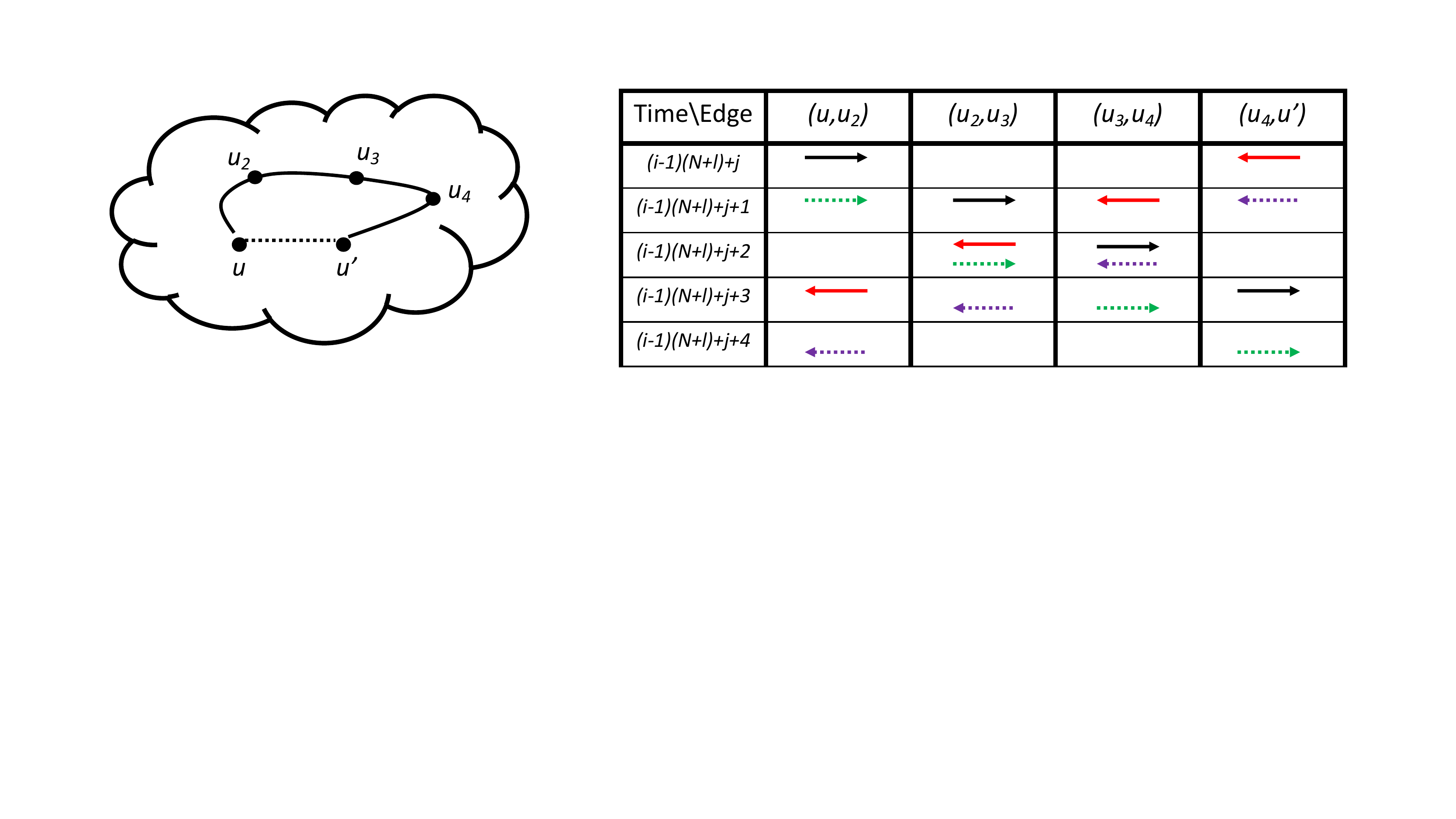}
\vspace{-45mm}
\caption{An illustration of how information sent across edge $e=(u,u')$ in $(\tilde{\cF},\tilde{\cG})$ traverses the path $u=u_1,u_2,u_3,u_4,u_5=u'$ in $(\cF_{path},\cG_{path})$.  In this example, $\ell=5$. The solid right arrows (in black) represent information sent from $u$ to $u'$ at time $t=(i-1)(N+\ell)+j$ in $(\tilde{\cF},\tilde{\cG})$. The dotted right arrows (in green) represent information sent from $u$ to $u'$ at time $t=(i-1)(N+\ell)+j+1$ in $(\tilde{\cF},\tilde{\cG})$.  The solid left arrows (in red) represent information sent from $u'$ to $u$ at time $t=(i-1)(N+\ell)+j$ in $(\tilde{\cF},\tilde{\cG})$. The dotted left arrows (in purple) represent information sent from $u'$ to $u$ at time $t=(i-1)(N+\ell)+j+1$ in $(\tilde{\cF},\tilde{\cG})$. 
%Notice, for example, that at time $t=(i-1)(N+\ell)+j+2$, edge $(u_2,u_3)$ carries information that was sent over $e=(u,u')$ in 
}
\label{fig:pipeline}
\end{figure*}

We now use the $(N\e,\bR-\Delta,n,N^2)$ code $(\tilde{\cF},\tilde{\cG})$ to construct an  $(N\e,\bR-\Delta,n,N(N+\ell))$ code for $\cI^*_{path}$. That is, we use a code for the network that includes edge $e=(u,u')$ to build a code for the network in which edge $e=(u,u')$ is removed and replaced with a path of length $\ell-1$ of the same capacity. 
As $N$ is chosen above to be ``sufficiently large'', we assume here that $\ell$ is significantly smaller than $N$, say $\ell = \delta N$ for any constant $\delta > 0$ of our choice.
The code $(\cF_{path},\cG_{path})$ for $\cI^*_{path}$ is a slight modification of the code $(\tilde{\cF},\tilde{\cG})$.
In $(\cF_{path},\cG_{path})$ we still communicate with inner-blocklength $n$.
However the outer-blocklegth is set to $N(N+\ell)$ instead of $N^2$. 
Like code $(\tilde{\cF},\tilde{\cG})$, code $(\cF_{path},\cG_{path})$ operates in sub-blocks, where for each $i \in [N]$, sub-block $i$ here takes the form $t=(i-1)(N+\ell)+j$ for $j \in [N+\ell]$.
Roughly speaking, the $i$'th sub-block of code $(\tilde{\cF},\tilde{\cG})$ determines the $i$'th sub-block of code $(\cF_{path},\cG_{path})$. 
The first $N$ time steps in each sub-block $i$ of $(\cF_{path},\cG_{path})$ 
perform precisely the same operations as are performed in sub-block $i$ of
$(\tilde{\cF},\tilde{\cG})$. 
The last $\ell$ time steps in each sub-block in $(\cF_{path},\cG_{path})$ are used 
to transmit the information sent across edge $e=(u,u')$ in $(\tilde{\cF},\tilde{\cG})$ along the path of length $\ell-1$ that replaced edge $e$ in $(\cF_{path},\cG_{path})$.

We first describe the coding operations of  $(\cF_{path},\cG_{path})$ on edges $e'$ that are not on the path $u=u_1,u_2,u_3, \dots, u_{\ell-1},u_\ell=u'$. 
For any such edge $e'=(v,v')$, the transmitted message from $v$ to $v'$ at time step $t=(i-1)(N+\ell)+j$ in $(\cF_{path},\cG_{path})$ equals the transmitted message from $v$ to $v'$ in $(\tilde{\cF},\tilde{\cG})$ in time step $t=(i-1)N+j$. That is, the message over $e'$ in the $j$'th time step of  the $i$'th sub-block in $(\cF_{path},\cG_{path})$ equals the message over $e'$ in the $j$'th time step of  the $i$'th sub-block in $(\tilde{\cF},\tilde{\cG})$. 
In the remaining time steps in each sub-block of $(\cF_{path},\cG_{path})$, no information is transmitted in either direction over edge $e'$. 
Namely, for each sub-block $i$, in time steps $t=(i-1)(N+\ell)+(N+1)$ through $t=i(N+\ell)$ a predetermined fixed message is transmitted over $e'$.

We now describe the coding operations of  $(\cF_{path},\cG_{path})$ on edges $e'=(u_r,u_{r+1})$ on the path $u=u_1,u_2,u_3, \dots, u_{\ell-1},u_\ell=u'$. 
Roughly speaking, these edges ``pipe-line" the message transmitted  over the removed edge $e=(u,u')$ from $u$ to $u'$ and from $u'$ to $u$, in $G$. (See Figure~\ref{fig:pipeline} for an illustration.) For edge $e'=(u_r,u_{r+1})$, the transmitted message from $u_r$ to $u_{r+1}$ in time step $t=(i-1)(N+\ell)+j+r-1$ in $(\cF_{path},\cG_{path})$ equals the transmitted message from $u$ to $u'$ over $e=(u,u')$ in $(\tilde{\cF},\tilde{\cG})$ in time step $t=(i-1)N+j$. 
In addition, the transmitted message from $u_{r+1}$ to $u_{r}$ in time step $t=(i-1)(N+\ell)+j+\ell-r-1$ of $(\cF_{path},\cG_{path})$ equals the transmitted message from $u'$ to $u$ over $e=(u',u)$ in $(\tilde{\cF},\tilde{\cG})$ in time step $t=(i-1)N+j$. 
That is, the message over $e=(u,u')$ from $u$ to $u'$ in the $j$'th time step of the $i$'th block of $(\tilde{\cF},\tilde{\cG})$ traverses the path $u=u_1,u_2,u_3, \dots, u_{\ell-1},u_\ell=u'$ in time steps $t=(i-1)(N+\ell)+j$ through $t=(i-1)(N+\ell)+j+\ell-2$ in $(\cF_{path},\cG_{path})$. 
In the other direction, the message over $e=(u',u)$ from $u'$ to $u$ in the $j$'th time step of the $i$'th block of $(\tilde{\cF},\tilde{\cG})$ traverses the path $u'=u_\ell,u_{\ell-1}, \dots, u_{2},u_1=u$ in time steps $t=(i-1)(N+\ell)+j$ through $t=(i-1)(N+\ell)+j+\ell-2$ in $(\cF_{path},\cG_{path})$. 

%Stated again in an equivalent manner, at a given round $t=i(N+\ell)+\alpha$ during block $i$ of  $(\cF_{path},\cG_{path})$, edge $e'=(u_r,u_{r+1})$ will transmit from $u_r$ to $u_{r+1}$ the message transmitted from $u$ to $u'$ at time $t=iN+\alpha-(r-1)$ in $(\cF_{N^2},\cG_{N^2})$.
%For the remaining values of $\alpha$ a fixed constant message is transmitted on $e'=(u_r,u_{r+1})$ from $u_r$ to $u_{r+1}$.
%Similarly, for $\alpha$ between $\ell-r-1$ and $N+\ell-r-1$, edge $e'=(u_r,u_{r+1})$ will transmit in $(\cF_{path},\cG_{path})$ from $u_{r+1}$ to $u_{r}$ the message transmitted from $u'$ to $u$ at time $t=iN+\alpha-(\ell-r-1)$ in $(\cF_{N^2},\cG_{N^2})$. 
%For the remaining values of $\alpha$ a fixed constant message is transmitted on $e'=(u_r,u_{r+1})$ from $u_{r+1}$ to $u_{r}$.

We now show that the above communication scheme of $(\cF_{path},\cG_{path})$ is feasible on $\cI^*_{path}$. 
We first show that any message on edge $e'=(v,v')$ from $v$ to $v'$ in time step $t$ in $(\cF_{path},\cG_{path})$ can be computed in $G^*_{path}$ from the information available to node $v$ prior to time step $t$. 
If the edge $e=(u,u')$ removed from $\cI$ is not an incoming edge to $v$ in $\cI$ (i.e., if $v\ne u$ and $v \ne u'$), this follows directly by the feasibility of $(\tilde{\cF},\tilde{\cG})$ over $\cI$.
Otherwise, one must take into account the {\em delay} incurred be replacing $e$ with the path of length $\ell-1$. However, due to the interleaved structure of $(\tilde{\cF},\tilde{\cG})$, this delay  does not impact the feasibility of $(\cF_{path},\cG_{path})$. Specifically,  the structure of $(\tilde{\cF},\tilde{\cG})$ ensures that any message transmitted over an edge incoming to $v$ in sub-block $i$ of $(\tilde{\cF},\tilde{\cG})$ will be used by node $v$ as input to subsequent encoding only in the next sub-block $i+1$ of $(\tilde{\cF},\tilde{\cG})$. 
The same holds for $(\cF_{path},\cG_{path})$. Moreover, by our definitions, any message transmitted over an edge incoming to $v$ in sub-block $i$ of $(\tilde{\cF},\tilde{\cG})$ must also be transmitted to $v$ in sub-block $i$ of $(\cF_{path},\cG_{path})$.  We conclude that any message on edge $e'=(v,v')$ from $v$ to $v'$ in time step $t$ in $(\cF_{path},\cG_{path})$ can be computed in $G^*_{path}$ from the information available to $v$ prior to time step $t$.  

Secondly, we show that the alphabet $\cX^n_{e'}$ of edges $e'=(v,v')$ in $G^*_{path}$ can support the code $(\cF_{path},\cG_{path})$ in the sense that the messages transmitted from $v$ to $v'$ and from $v'$ to $v$ in time step $t$ have support $\overrightarrow{\cX^n_{e',t}}$ and $\overleftarrow{\cX^n_{e',t}}$ respectively, that satisfy
$|\overrightarrow{\cX^n_{e',t}}| \cdot |\overleftarrow{\cX^n_{e',t}}| \leq |\cX^n_{e'}|$.
For edges $e'$ that are not on the path $u=u_1,u_2,u_3, \dots, u_{\ell-1},u_\ell=u'$, this follows directly by the  feasibility of the code $(\tilde{\cF},\tilde{\cG})$. For edges $e'=(u_r,u_{r+1})$ along the path, the alphabet $\cX^n_{e'}$ corresponding to $e'$ equals $\cX^n_e$ of (the removed edge) $e=(u,u')$ in $G$. 
Moreover, as discussed above, we have that $\overleftarrow{\cX^n_{e,t}} = \overleftarrow{\cX^n_{e,t'}}$ and $\overrightarrow{\cX^n_{e,t}} = \overrightarrow{\cX^n _{e,t'}}$ for any $t$ and $t'$ in the $i$'th sub-block of  $(\tilde{\cF},\tilde{\cG})$.
We thus denote the alphabets $\overleftarrow{\cX^n_{e,t}}$ and $\overrightarrow{\cX^n_{e,t}}$ in $(\tilde{\cF},\tilde{\cG})$ for any $t$ in the $i$'th sub-block of $(\tilde{\cF},\tilde{\cG})$ (i.e., $t=(i-1)N+j$ for $j \in [N]$) by $\overleftarrow{\cX^n_{e,{\tt block \ i}}}$ and $\overrightarrow{\cX^n_{e,{\tt block \ i}}}$ respectively.
%=iN+j$ and $t'=iN+j'$ with $j,j'\in[n]$. 
Thus, for any $t$ in the $i$'th sub-block of $(\cF_{path},\cG_{path})$, i.e. $t=(i-1)(N+\ell)+j$ for $j \in [N+\ell]$, and any edge $e'=(u_r,u_{r+1})$ on the path $u=u_1,u_2,u_3, \dots, u_{\ell-1},u_\ell=u'$, we define alphabets $\overleftarrow{\cX^n_{e',t}}$ and $\overrightarrow{\cX^n_{e',t}}$ in $(\cF_{path},\cG_{path})$ to be equal to $\overleftarrow{\cX^n_{e,{\tt block \ i}}}$ and $\overrightarrow{\cX^n_{e,{\tt block \ i}}}$ of $(\tilde{\cF},\tilde{\cG})$ respectively. 
Such a definition for $\overleftarrow{\cX^n_{e',t}}$ and $\overrightarrow{\cX^n_{e',t}}$ allows $e'=(u_r,u_{r+1})$ to support the messages defined previously by $(\cF_{path},\cG_{path})$ during time steps $t=i(N+\ell)+j$ for $j \in [N+\ell]$. (See, for example, edge $(u_2,u_3)$ of Figure~\ref{fig:pipeline} in time step $t=(i-1)(N+\ell)+j+2$.)

We therefore conclude that $(\cF_{path},\cG_{path})$ is an $(N\e,\frac{N^2}{N(N+\ell)}(\bR-\Delta),n,N(N+\ell))$-feasible code for $\cI^*_{path}$. Finally, as $\e \leq 1/N^2$, $\Delta>0$ can be chosen to be arbitrarily small, and for any $\delta>0$ we can choose $N$ sufficiently large such that $\ell=\delta N$, we conclude that $\bR \in \cR(\cI^*_{path})$.

An almost identical proof (with very slight modifications) holds for the zero-error case as well.
%$\ \blacksquare$
%\begin{proof} 
\end{proof}

%e=(u,v)$ is a brige in the graph $G$ of instance $\cI$. Namely, we assume that the instance $\cI^{e}$ is disconnectedethere are no additional 
%
%Case 1: the edge e is a bridge (namely removing it form 2 disconnected components). In this case, one needs to prove first using cut set bounds that for S/T pairs in different components the max rate is \delta. One then claims that if a given rate vector was achievable over G, then the same vector in which we zero out the rate of S/T pairs across components is achievable after removing e. To prove this, using the scheme of G, one can fix the messages between the S/T pairs across components. Now one claims that the original rate of S/T pairs that lie in the same component can be obtained with the same scheme but just over each component. I will try to formalize next week.
%
%* Case 2: there are additional paths connecting the endpoints of e in G (after removing e).  In this case, by allocating part of the capacity of these paths to communicating the message on e, we can use a combination of Lemmas 1 and 2 and get a reduced rate. Here we assume \delta is smaller than the total capacity of the additional paths (hence the asymptotic version of edge removal). The reduced rate is multiplicatively smaller than the original one (Lemma 2). Calculating the additive loss gives a loss that tends to zero when \delta tends to zero. 

Corollary~\ref{cor:main} follows immediately from Theorem~\ref{the:main}. 
\begin{corollary}
\label{cor:main}
The vanishing edge-removal statement and the zero-error vanishing-edge-removal statement hold for undirected network instances. 
\end{corollary}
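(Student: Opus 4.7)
The plan is to deduce the corollary directly from Theorem~\ref{the:main} together with the monotonicity of $\cR(\Il)$ as a function of $\lambda$. First I would note that $\lim_{\lambda \to 0^+} \cR(\Il)$ should be interpreted as $\bigcap_{\lambda > 0} \cR(\Il)$: the family is monotone nondecreasing in $\lambda$ because a larger added-edge capacity only gives the encoders more resources, so as $\lambda \to 0^+$ the sets shrink to their intersection. I would verify this monotonicity by a straightforward code-inclusion argument (a code on $\cI^{\lambda,e}$ is automatically valid on $\cI^{\lambda',e}$ whenever $\lambda' \geq \lambda$, by allowing the heavier edge to carry the same messages), and similarly for $\cR_0$.

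Next I would handle the ``easy'' inclusion $\cR(\cI) \subseteq \lim_{\lambda\to 0}\cR(\Il)$. This is immediate: any $(\e,\bR-\Delta,n,N)$-feasible code for $\cI$ extends to one for $\Il$ by assigning a fixed symbol to edge $e$ in both directions at every time step. Hence $\cR(\cI)\subseteq \cR(\Il)$ for every $\lambda>0$, and consequently $\cR(\cI) \subseteq \bigcap_{\lambda>0}\cR(\Il)$. The same reasoning works verbatim in the zero-error setting.

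For the nontrivial direction $\lim_{\lambda\to 0}\cR(\Il) \subseteq \cR(\cI)$, I would invoke Theorem~\ref{the:main} with its constant $c = c(\cI)$. Fix $\bR \in \bigcap_{\lambda>0}\cR(\Il)$. Then for every $\lambda>0$, the membership $\bR \in \cR(\Il)$ combined with Theorem~\ref{the:main} yields $\bR - c\lambda \in \cR(\cI)$ (coordinate-wise, with the $(\cdot)^+$ truncation built into the definition of $\bR - c\lambda$). Now given arbitrary $\e,\Delta>0$, I would choose $\lambda = \Delta/(2c)$, so that $\bR - \Delta/2 \in \cR(\cI)$; applying Definition~\ref{def:cap} to this rate vector with the slack parameter $\Delta/2$ and error $\e$ produces infinitely many inner and outer blocklengths $n,N$ for which $\cI$ is $(\e,\bR-\Delta,n,N)$-feasible. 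Since $\e,\Delta>0$ were arbitrary, $\bR \in \cR(\cI)$.

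The zero-error case is obtained by substituting $\cR_0$ for $\cR$ throughout and using the zero-error half of Theorem~\ref{the:main} together with Definition~\ref{def:zcap} (where the parameter $\e$ does not appear). I do not foresee a genuine obstacle here; the only conceptual point to be careful about is framing the $\lim_{\lambda\to 0}\cR(\Il)$ notation precisely as an intersection (so that we have the membership $\bR\in\cR(\Il)$ \emph{for every} $\lambda>0$, not merely for some sequence tending to $0$), and then choosing $\lambda$ as a function of the desired slack $\Delta$ so that the loss $c\lambda$ from Theorem~\ref{the:main} is absorbed into $\Delta$.
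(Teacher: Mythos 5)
Your proof is correct and matches the paper's (implicit) intent: the paper states only that Corollary~\ref{cor:main} ``follows immediately from Theorem~\ref{the:main},'' and your argument is precisely the routine verification being left to the reader — interpreting the limit as $\bigcap_{\lambda>0}\cR(\Il)$ via monotonicity of the capacity region in $\lambda$, observing the trivial inclusion $\cR(\cI)\subseteq\cR(\Il)$, and then absorbing the $c\lambda$ loss from Theorem~\ref{the:main} into the slack $\Delta$ of Definition~\ref{def:cap}. The one small stylistic difference is that you unwind the definition of $\cR(\cI)$ directly rather than appealing to closedness of the capacity region (as the authors do in the proof of Corollary~\ref{cor:zero}); both are fine, and the $(\cdot)^+$ truncation composes correctly under your two-stage slack argument.
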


Using the connections outlined in \cite{LE:19}, Corrollary~\ref{cor:zero} also follows from Theorem~\ref{the:main}.
\begin{corollary}
\label{cor:zero}
Let $\cI$ be an undirected network instance, then $\cR(\cI)=\cR_0(\cI)$.
\end{corollary}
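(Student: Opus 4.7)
The containment $\cR_0(\cI) \subseteq \cR(\cI)$ is immediate from the definitions, since every $(0,\bR-\Delta,n,N)$-feasible code is also $(\e,\bR-\Delta,n,N)$-feasible for every $\e>0$. The work lies in the reverse inclusion, and my plan is to combine the zero-error half of Theorem~\ref{the:main} with the reduction from \cite{LE:19} that turns a quantitative zero-error edge-removal bound into the coincidence of zero-error and vanishing-error capacity regions. The excerpt signposts this route by saying ``Using the connections outlined in \cite{LE:19}.''

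The high-level argument runs as follows. Fix $\bR \in \cR(\cI)$ and $\Delta>0$. By Lemma~\ref{lemma:eps}, for infinitely many pairs $(n,N)$ there is an $(\e,\bR-\Delta/2,n,N)$-feasible code on $\cI$ with $\e \leq 1/\max^2(n,N)$. For an auxiliary parameter $\lambda>0$, form the augmented undirected instance $\Il$ by adding a single edge $e$ of capacity $\lambda$ between a chosen source and a chosen terminal. Following the construction of \cite{LE:19}, this extra edge is used to transmit a compressed description of the correction information needed to fix the (extremely rare) decoding errors of the original code. Since $\e$ decays faster than $1/(nN)$, the entropy of the correction list is sub-linear in the total blocklength, so for any fixed $\lambda>0$ a sufficiently large blocklength suffices to carry it across $e$. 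The resulting code has zero error on $\Il$, yielding $\bR - \Delta/2 \in \cR_0(\Il)$.

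Now I would invoke Theorem~\ref{the:main} applied to the undirected instance $\cI$ and the added edge $e$: the zero-error edge-removal statement holds with $\fe = c\lambda$, where $c$ depends only on the edge capacities of $\cI$. Hence $\bR - \Delta/2 \in \cR_0(\Il)$ implies $\bR - \Delta/2 - c\lambda \in \cR_0(\cI)$. Choosing $\lambda \leq \Delta/(2c)$ gives $\bR - \Delta \in \cR_0(\cI)$, and since $\Delta>0$ is arbitrary, $\bR \in \cR_0(\cI)$.

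The main obstacle is step two, namely verifying that the reduction of \cite{LE:19} transports from directed to undirected networks verbatim. The construction uses the auxiliary edge in a unidirectional manner (source to terminal), so the bidirectionality of the added edge in our setting does not inflate the capacity on the source-to-terminal side; intuitively, the reverse direction of the added edge is simply unused. The real verification is that no step of the directed-network reduction implicitly relies on acyclicity or a directed structure, and that the compression/decoding of the correction information is compatible with the inner/outer blocklength model used throughout this paper. Once that technical check is in place, the chain above closes and Corollary~\ref{cor:zero} drops out.
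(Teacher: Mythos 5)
Your high-level plan — convert the vanishing-error code into a zero-error code on an augmented instance via the machinery of \cite{LE:19}, then invoke the zero-error half of Theorem~\ref{the:main} to peel off the augmenting edge, then let $\lambda\to 0$ — is exactly the route the paper takes. However, your description of the \cite{LE:19} construction is not correct, and the simplification you substitute does not work in general. You propose to ``add a single edge $e$ of capacity $\lambda$ between a chosen source and a chosen terminal'' and route a compressed correction list across it. In a multi-terminal instance, that single edge only reaches one terminal; it cannot deliver correction information to $d_2,\dots,d_r$. Worse, the chosen source is not in a position to compute the correction list at all: it does not observe the other sources' messages or what the terminals received, so it does not know which codewords were decoded erroneously. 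Theorem~3.1(a) of \cite{LE:19} avoids both problems by building a \emph{different} instance $\cI_1$ (not simply $\cI$ with one extra edge), augmented with auxiliary structure so that (i) $\cR_0(\cI_1)=\cR_0(\cI)$ and (ii) $\bR\in\cR_0(\cI_1^{\lambda,e})$ for every $\lambda>0$. The paper then applies Theorem~\ref{the:main} to $\cI_1$, not to $\cI$. Your version asserts $\bR-\Delta/2\in\cR_0(\cI^{\lambda,e})$ directly, which is a strictly stronger and unjustified claim.

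Two smaller remarks. First, the issue you flag at the end — whether the \cite{LE:19} reduction carries over to the undirected model — is a genuine concern, since Theorem~\ref{the:main} only applies to undirected instances, and one must check that the constructed $\cI_1$ can be taken undirected; the paper leans on the cited construction without spelling this out, so you are right to want it verified, but you do not resolve it either. Second, your bookkeeping of slacks ($\Delta/2$, then $c\lambda$, then $\lambda\le\Delta/(2c)$) is fine and mirrors the paper's ``let $\lambda\to 0$ and use closedness of $\cR_0(\cI)$'' step, so that part of the argument is sound once the reduction is repaired.
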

\begin{proof}
We use Theorem 3.1(a) of \cite{LE:19} to show in Corrollary~\ref{cor:zero} that $\cR(\cI) \subseteq \cR_0(\cI)$ (the other direction is immediate).
Namely,  in Theorem 3.1(a) of \cite{LE:19} it is shown, given an instance $\cI$ and $\bR \in \cR(\cI)$, how to construct an instance $\cI_1$ and an edge $e$ in $\cI_1$ such that
(i) $\cR_0(\cI)=\cR_0(\cI_1)$ and
(ii) for any $\lambda>0$, it holds that $\bR \in \cR_0(\cI_1^{\lambda,e})$.
By Theorem~\ref{the:main}, $\bR - c\lambda \in \cR_0(\cI_1)$ for a constant $c$ that depends only on the capacities of edges in $\cI_1$. This, in turn, implies that $\bR - c\lambda \in \cR_0(\cI)$. As $\cR_0(\cI)$ is, by definition, a closed set and $\lambda$ can be taken to be arbitrarily small, we conclude that $\bR \in \cR_0(\cI)$.
\end{proof}

\section{Conclusions}
In this work, we study the edge removal problem on undirected networks. Using the conceptually simple idea of {\em re-routing} information on the removed edge (if possible in the given topology) we show that the asymptotic version of the edge-removal statement holds. 
That is, we show that removing an edge of negligible capacity in undirected networks has only a negligible impact on the capacity region. 
This, in turn, implies that the zero-error capacity region of an undirected network equals its vanishing-error capacity region. Whether similar results are true for directed networks is an intriguing open problem. In addition, in light of the multiple-unicast coding advantage conjecture on undirected networks, it would be interesting to prove Theorem~\ref{the:main} with a constant $c=1$.

\bibliographystyle{IEEEtran}
\bibliography{proposal}

\end{document}